\newtheorem{theorem}{Theorem}
\newtheorem{corollary}{Corollary}
\begin{document}
\captionsetup{belowskip=0pt,aboveskip=0pt}

\title{Power Allocation in HARQ-based \\Predictor Antenna Systems}

\author{Hao~Guo,~\IEEEmembership{Student~Member,~IEEE},
        Behrooz~Makki,~\IEEEmembership{Senior~Member,~IEEE},
        \\Mohamed-Slim Alouini,~\IEEEmembership{Fellow,~IEEE},
        and Tommy~Svensson,~\IEEEmembership{Senior~Member,~IEEE}
\thanks{H. Guo and T. Svensson are with the Department of Electrical Engineering, Chalmers University of Technology, 41296 Gothenburg, Sweden (email: hao.guo@chalmers.se; tommy.svensson@chalmers.se).}
\thanks{B. Makki is with Ericsson Research, 41756 Gothenburg, Sweden (email: behrooz.makki@ericsson.com).}
\thanks{M.-S. Alouini is with the King Abdullah University of Science and Technology,
Thuwal 23955-6900, Saudi Arabia (e-mail: slim.alouini@kaust.edu.sa).}}


\maketitle


\begin{abstract}
In this work, we study the performance of predictor antenna (PA) systems using hybrid automatic repeat request (HARQ). Here, the PA system is referred to as a system with two sets of antennas on the roof of a vehicle. In this setup, the PA positioned in the front of the vehicle can be used to predict the channel state information at the transmitter (CSIT) for data transmission to the receive antenna (RA) that is aligned behind the PA. Considering spatial mismatch, due to the vehicle mobility, we derive closed-form expressions for the optimal power allocation and the minimum average power of the PA systems under different outage probability constraints. The results are presented for different types of HARQ protocols and we study the effect of different parameters on the performance of PA systems. As we show, our proposed approximation scheme enables us to analyze PA systems with high accuracy. Moreover, for different vehicle speeds, we show that HARQ-based feedback can reduce the outage-limited power consumption of PA systems by orders of magnitude.
\end{abstract}

%
\IEEEpeerreviewmaketitle

\vspace{-6mm}
\section{Introduction}
Vehicle communication is one of the important use cases in the fifth generation of wireless networks (5G) and beyond \cite{Dang2020what}.  Here, the focus is to provide efficient and reliable connections to cars and public transports, e.g., busses and trains. Channel state information at the transmitter (CSIT), plays an important role in achieving these goals, as it enables advanced closed-loop transmission schemes such as link adaptation, multi-user scheduling, interference coordination and spatial multiplexing schemes.  However,  typical CSIT acquisition systems, which are mostly designed for (semi)static channels, may not work well as the speed of the vehicle increases. This is because, depending on the vehicle speed, the position of the antennas may change quickly and the CSIT becomes inaccurate. 

To overcome this issue, \cite{Sternad2012WCNCWusing} proposes the concept of predictor antenna (PA).  Here, in its standard version, a  PA system is referred to as a setup with two (sets of) antennas on the roof of a vehicle. The PA positioned in the front of the vehicle can be used to improve the CSIT for data transmission to the receive antenna (RA) that is aligned behind the PA. The potential of such setups have been previously shown through experimental tests \cite{Sternad2012WCNCWusing,Dinh2013ICCVEadaptive,Jamaly2014EuCAP}, and its performance has been analyzed in, e.g., \cite{Guo2019WCLrate,guo2020semilinear,guo2020rate}. 

One of the challenges of the PA setup is spatial mismatch that  causes CSIT for the RA to be partially inaccurate. This occurs if the RA does not reach the same spatial point as the PA, due to, e.g.,  the delay for preparing the data  is not equal to the time that is needed until the RA reaches the same point as the PA \cite{Dinh2013ICCVEadaptive}.  On the other hand, in a typical PA setup the spectrum  is underutilized, and the spectral efficiency could be further improved in case the PA could be used not only for channel  prediction, but also for data transmission. We address these challenges by implementing hybrid automatic repeat request (HARQ)-based protocols in PA systems as follows.

In this work, we analyze the outage-limited performance of PA systems using HARQ.   With our proposed approach, the PA is used not only for improving the CSIT in the retransmissions to the RA, but also for data transmission in the initial round. In this way, as we show,  the combination of PA and HARQ protocols makes it possible to improve the spectral efficiency, and adapt the transmission parameters to mitigate the effect of spatial mismatch.

The problem is cast in the form of minimizing the average transmission power subject to an outage probability constraint. Particularly, we  develop approximation techniques to derive closed-form expressions for the instantaneous and average transmission power as well as the optimal power allocation minimizing the outage-limited power consumption. The results are presented for the cases with different repetition time diversity (RTD) and incremental redundancy (INR) HARQ protocols \cite{6566132,5754756}. Moreover, we study the effect of different parameters such as the antennas separation and the vehicle speed on the system performance. 

As we show through analysis and simulations, the implementation of HARQ as well as power allocation can improve the outage-limited performance of PA systems by orders of magnitude, compared to the cases with no retransmission. For example, consider an outage probability constraint of  $10^{-5}$, initial rate $R=4$ nats-per-channel-use (npcu) and a maximum of two transmission rounds. Then, compared to the cases with no retransmission, our proposed power-adaptive PA-HARQ scheme can reduce the required signal-to-noise ratio (SNR) by  25 dB and 30 dB for the RTD and the INR schemes, respectively.

\vspace{-4mm}
\section{Problem Formulation}

Here, we first introduce the basics of PA systems which is followed by our proposed HARQ-based PA setup.
\vspace{-3mm}
\subsection{Standard PA System}
\begin{figure}
\centering
  \includegraphics[width=0.8\columnwidth]{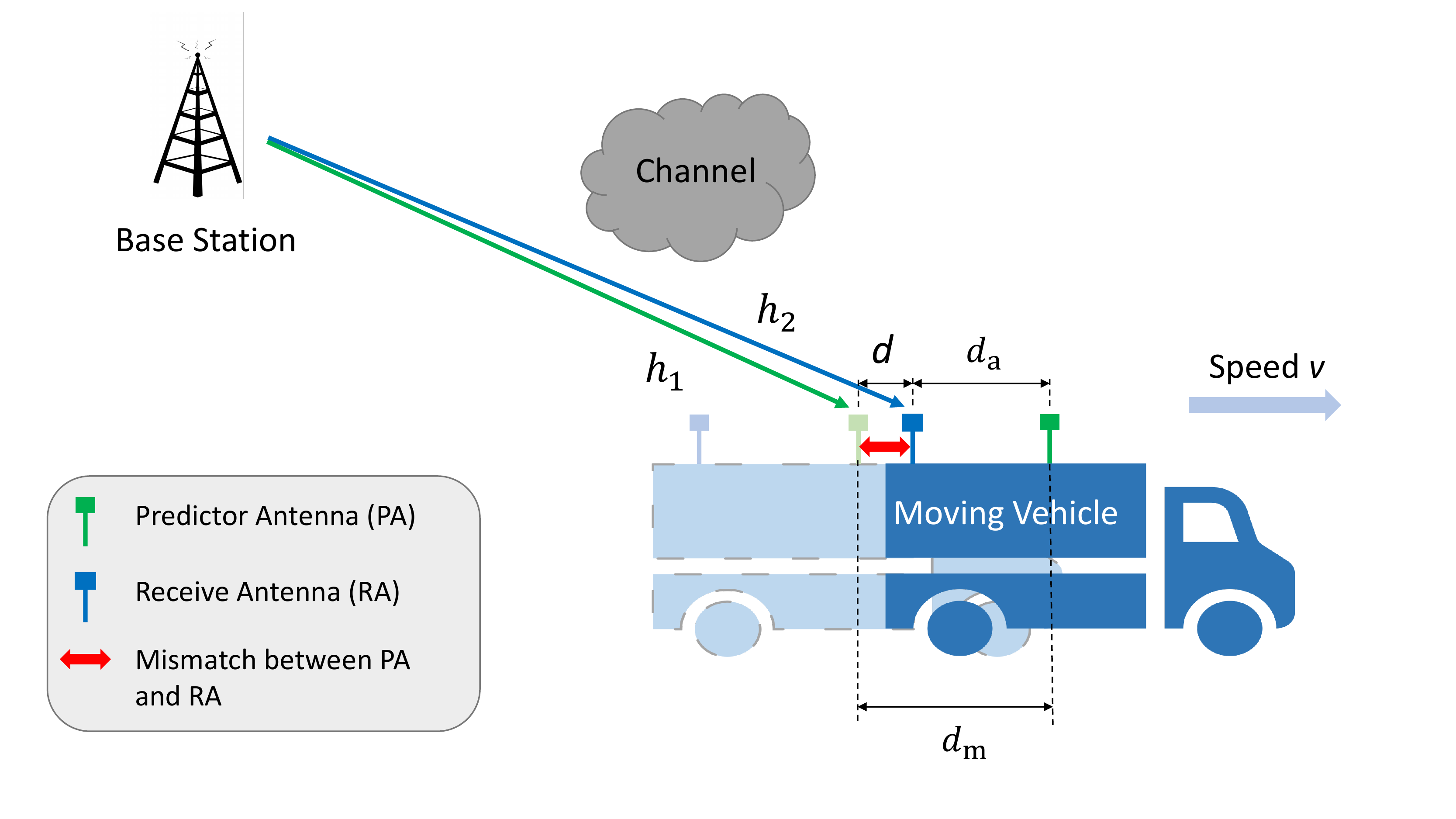}\\
\caption{Predictor antenna system with mismatch problem.}\vspace{-4mm}\label{ARQmodel}
\end{figure}
Figure \ref{ARQmodel} shows the standard PA system with two antennas on the roof of a vehicle. Here, the PA first sends pilots in time $t$. Then, the base station (BS) estimates the PA-BS channel $h_1$ and sends the data in time $t+\delta$ to the RA where  $\delta$ depends on the processing time at the BS. At the same time, the vehicle moves forward $d_\text{m}$ while the antenna separation between the PA and the RA is $d_\text{a}$. Then, considering downlink transmission in the BS-RA link, the  signal received by the RA is
\begin{align}\label{eq_Y}
y = \sqrt{P}h_2 x + z.
\end{align}
Here, $P$ represents the transmit power, $x$ is the input message with unit variance, and $h_2$ is the fading coefficient between the BS and the RA. Also, $z \sim \mathcal{CN}(0,1)$ denotes the independent and identically distributed (IID) complex Gaussian noise added at the receiver.

We represent the probability density function (PDF) and cumulative density function (CDF) of a random variable $A$ by $f_A(\cdot)$ and $F_A(\cdot)$, respectively.  Due to spatial mismatch between the PA and the RA, assuming a semi-static propagation environment, i.e., assuming that the coherence time of the propagation environment is much larger than $\delta$ \footnote{This has been experimentally verified in, e.g., \cite{Jamaly2014EuCAP}}, $h_2$ and $h_1$ are correlated according to \cite[Eq. 5]{Guo2019WCLrate} 
\begin{align}\label{eq_H}
    h_2 = \sqrt{1-\sigma^2} h_1 + \sigma q,
\end{align}
where $q \sim \mathcal{CN}(0,1)$ which is independent of the known channel value $h_1\sim \mathcal{CN}(0,1)$, and $\sigma$ is a function of the mis-matching distance $d = |d_\text{a}-d_\text{m}|$ \cite[Eq. 4]{Guo2019WCLrate}. Defining $g_1 = |h_1|^2$ and $ g_2 = |h_2|^2$, the CDF $F_{g_2|g_1}$ is given by
\begin{align}\label{eq_cdf}
    F_{g_2|g_1}(x) = 1 - Q_1\left( \sqrt{\frac{2g_1(1-\sigma^2)}{\sigma^2}}, \sqrt{\frac{2x}{\sigma^2}}  \right),
\end{align}
where $Q_1(s,\rho) = \int_{\rho}^{\infty} xe^{-\frac{x^2+s^2}{2}}I_0(xs)\text{d}x$, $s, \rho \ge 0$, is the first-order Marcum $Q$-function. Also, $I_n(x) = (\frac{x}{2})^n \sum_{i=0}^{\infty}\frac{(\frac{x}{2})^{2i} }{i!\Gamma(n+i+1)}$ is the $n$-order modified Bessel function of the first kind, and $\Gamma(z) = \int_0^{\infty} x^{z-1}e^{-x} \mathrm{d}x$ represents the Gamma function. In this way, although parameter adaptation is performed based on perfect CSIT of $h_1$ at time $t$, the spatial mismatch may lead to unsuccessful decoding by the RA  at $t+\delta$.

\vspace{-4mm}
\subsection{Proposed HARQ-based PA System}
Along with the spatial mismatch problem, the typical PA system  still suffers from poor spectral efficiency, compared to regular multiple-antenna system in static conditions, because the PA is used only for channel estimation. On the other hand, because the PA system includes the PA-BS feedback link,  HARQ can be supported by the PA structure. For this reason, we propose a setup as follows.

\begin{figure}
\centering
  \includegraphics[width=0.8\columnwidth]{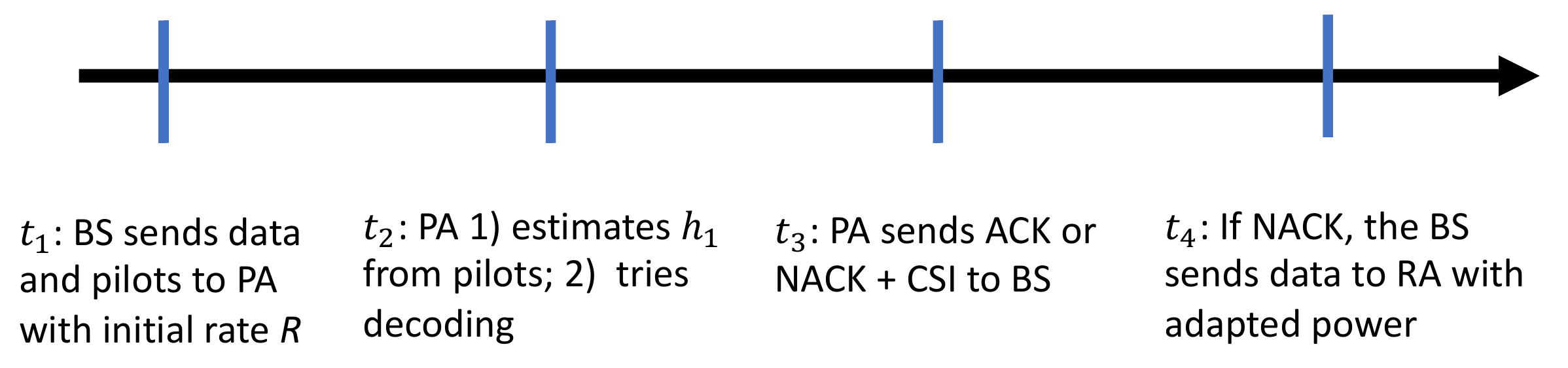}\\
\caption{Time structure for the proposed PA-HARQ scheme.}\vspace{-5mm}\label{fig_timeslot}
\end{figure}

Here, as  seen in Fig. \ref{fig_timeslot}, with no CSIT, at $t_1$ the BS sends pilots as well as the encoded data with certain initial rate $R$ and power $P_1$ to the PA. At $t_2$, the PA estimates the channel $h_1$ from the received pilots. At the same time, the PA tries to decode the signal. If the message is correctly decoded, i.e., $R\leq \log(1+g_1P_1)$,  an acknowledgment (ACK) is fed back to the BS at $t_3$, and the data transmission stops. Otherwise, the PA sends both a negative acknowledgment (NACK) and high accuracy quantized CSI feedback about $h_1$. The number of quantization bits are large enough such that we can assume the BS to have perfect CSIT of $h_1$ (see \cite{guo2020semilinear} for the effect of imperfect CSIT on the performance of PA systems). With NACK, in the second transmission round at time $t_4$, the BS  transmits the message to the RA with power $P_2$  which is a function of the instantaneous channel quality $g_1$. The outage occurs if the RA cannot decode the message at the end of the second round. 
\vspace{-3mm}
\section{Analytical Results}
Let $\epsilon$ be the outage probability constraint. Here, we present the results for the cases with RTD and INR HARQ protocols. With an RTD protocol, the same signal (with possibly different power) is sent in each retransmission round, and the receiver performs maximum ratio combining of all received copies of the signal. With INR, on the other hand, new redundancy bits are sent in the retransmissions, and the receiver decodes the message by combining all received signals \cite{6566132,5754756}.

Considering Rayleigh fading conditions with $f_{g_1}(x) =  e^{- x}$, the outage probability at the end of Round 1 is given by
\begin{align}\label{eq_pout}
&\text{Pr}(\text{Outage, Round 1}) = \text{Pr}\left\{R\leq \log(1+g_1P_1)\right\}\nonumber\\
&~~ = \text{Pr}\left\{g_1\leq \frac{e^{R}-1}{P_1}\right\}
 = 1-e^{ -\frac{\theta}{P_1}}, 
\end{align}
where $\theta = e^{R}-1$. Then, using the results of, e.g.,  \cite[Eq. 7, 18]{6566132} on the outage probability of the RTD- and INR-based HARQ protocols, the power allocation problem for the proposed HARQ-based PA system can be stated as
\begin{equation}
\label{eq_optproblem}
\begin{aligned}
\min_{P_1,P_2} \quad & \mathbb{E}_{g_1}\left[P_\text{tot}|g_1\right] \\
\textrm{s.t.} \quad &  P_1, P_2 > 0,\\
&P_\text{tot}|g_1 = \left[P_1 + P_2(g_1) \times \mathcal{I}\left\{g_1 \le \frac{\theta}{P_1}\right\}\right],
\end{aligned}
\end{equation}
with
\begin{align}\label{eq_optproblemrtd}
F_{g_2|g_1}\left\{\frac{\theta-g_1P_1}{P_2(g_1)}    \right\} = \epsilon, \quad\text{for RTD}
\end{align}
\begin{align}\label{eq_optprobleminr}
F_{g_2|g_1}\left\{\frac{e^{R-\log(1+g_1P_1)}-1}{P_2(g_1)}    \right\} = \epsilon, \quad\text{for INR}. 
\end{align}
Here, $P_\text{tot}|g_1$ is the total instantaneous transmission power for two transmission rounds (i.e., one retransmission) with given $g_1$, and we define $\Bar{P} \doteq \mathbb{E}_{g_1}\left[P_\text{tot}|g_1\right]$ as the expected power, averaged over $g_1$. Moreover, $\mathcal{I}(x)=1$ if $x>0$ and $\mathcal{I}(x)=0$ if $x \le 0$. Also, $\mathbb{E}_{g_1}[\cdot]$ represents the expectation operator over $g_1$. Here, we ignore the peak power constraint and assume that the BS is capable of transmitting  sufficiently high power. Finally, (\ref{eq_optproblem})-(\ref{eq_optprobleminr}) come from the fact that, with our proposed scheme, $P_1$ is fixed and optimized with no CSIT at the BS and based on average system performance. On the other hand, $P_2$ is adapted continuously based on the predicted CSIT.

Using (\ref{eq_optproblem}), the required power in Round 2 is given by
\begin{equation}
\label{eq_PRTDe}
    P_2(g_1) = \frac{\theta-g_1P_1}{F_{g_2|g_1}^{-1}(\epsilon)},
\end{equation}
for the RTD, and
\begin{equation}
\label{eq_PINRe}
    P_2(g_1) = \frac{e^{R-\log(1+g_1P_1)}-1}{F_{g_2|g_1}^{-1}(\epsilon)},
\end{equation}
for the INR, where $F_{g_2|g_1}^{-1}(\cdot)$ is the inverse of the CDF given in (\ref{eq_cdf}). Note that, $F_{g_2|g_1}^{-1}(\cdot)$ is a complex function of $g_1$ and, consequently, it is not possible to express $P_2$ in closed-form. For this reason, one can use   \cite[Eq. 2, 7]{6414576}
\begin{align}
    Q_1 (s, \rho) &\simeq e^{\left(-e^{\mathcal{I}(s)}\rho^{\mathcal{J}(s)}\right)}, \nonumber\\
    \mathcal{I}(s)& = -0.840+0.327s-0.740s^2+0.083s^3-0.004s^4,\nonumber\\
    \mathcal{J}(s)& = 2.174-0.592s+0.593s^2-0.092s^3+0.005s^4,
\end{align}
to approximate $F_{g_2|g_1}$ and consequently $F_{g_2|g_1}^{-1}(\epsilon)$. In this way, (\ref{eq_PRTDe}) and (\ref{eq_PINRe}) can be approximated as
\begin{equation}
\label{eq_PRTDa}
    P_2(g_1) = \Omega\left(\theta-g_1P_1\right),
\end{equation}
for the RTD, and
\begin{equation}
\label{eq_PINRa}
   P_2(g_1) = \Omega\left(e^{R-\log(1+g_1P_1)}-1\right),
\end{equation}
for the INR, where
\begin{equation}\label{eq_omega}
    \Omega (g_1) = \frac{2}{\sigma^2}\left(\frac{\log(1-\epsilon)}{-e^{\mathcal{I}\left(\sqrt{\frac{2g_1(1-\sigma^2)}{\sigma^2}}\right)}}\right)^{-\frac{2}{\mathcal{J}\left(\sqrt{\frac{2g_1(1-\sigma^2)}{\sigma^2}}\right)}}.
\end{equation}

In this way, for different HARQ protocols, we can express the  instantaneous transmission power of Round 2, for every given $g_1$ in closed-form. Then, the  power allocation problem (\ref{eq_optproblem}) can be solved numerically. However, (\ref{eq_omega}) is still complicated and it is not possible to solve (\ref{eq_optproblem}) in closed-form. For this reason, we propose an approximation scheme to solve (\ref{eq_optproblem}) as follows.

Let us initially concentrate on the RTD protocol. Then,  combining (\ref{eq_optproblem}) and (\ref{eq_PRTDe}), the expected total transmission power  is given by
\begin{align}\label{eq_barP}
    \Bar{P}_\text{RTD} &= P_1 + \int_0^{\theta/P_1} e^{- x}P_2\text{d}x\nonumber\\
    &= P_1 + \int_0^{\theta/P_1} e^{- x}\frac{\theta-x P_1}{F_{g_2|x}^{-1}(\epsilon)}\text{d}x.
\end{align}
Then, Theorem \ref{theorem1} derives the minimum required power  in Round 1 and the average total power consumption as follows.

\begin{theorem}\label{theorem1}
With RTD and given outage constraint $\epsilon$, the minimum required power in Round 1 and the average total power are given by (\ref{eq_P1}) and (\ref{eq_666666}), respectively. 
\end{theorem}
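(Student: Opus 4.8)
The idea is to turn the intractable integral in (\ref{eq_barP}) into an elementary one by treating the only awkward factor, $1/F_{g_2|x}^{-1}(\epsilon)=\Omega(x)$ from (\ref{eq_PRTDa})--(\ref{eq_omega}), as a constant $\tilde\Omega$ on the integration window, obtained by evaluating $\Omega(\cdot)$ at a representative value of $g_1$ in $[0,\theta/P_1]$ (e.g.\ $g_1=0$, $g_1=\mathbb{E}[g_1]=1$, or the conditional mean of $g_1$ over that window). This is exactly the ``approximation scheme'' that renders everything closed-form, and it is reasonable because on $[0,\theta/P_1]$ the polynomials $\mathcal{I},\mathcal{J}$ in (\ref{eq_omega}) vary slowly while the weight $e^{-x}$ concentrates the integrand near $x=0$. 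With this substitution, (\ref{eq_barP}) reduces to $\Bar{P}_\text{RTD}\approx P_1+\tilde\Omega\int_0^{\theta/P_1}e^{-x}(\theta-xP_1)\,\mathrm{d}x$.

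Next I would evaluate the remaining integral exactly. Using $\int_0^a e^{-x}\,\mathrm{d}x=1-e^{-a}$ and $\int_0^a x e^{-x}\,\mathrm{d}x=1-(1+a)e^{-a}$ with $a=\theta/P_1$, the two terms proportional to $\theta e^{-\theta/P_1}$ cancel and one is left with $\int_0^{\theta/P_1}e^{-x}(\theta-xP_1)\,\mathrm{d}x=\theta-P_1+P_1e^{-\theta/P_1}$, hence $\Bar{P}_\text{RTD}\approx P_1+\tilde\Omega\big(\theta-P_1+P_1e^{-\theta/P_1}\big)$; evaluated at the optimal $P_1$ this is (\ref{eq_666666}). (A second-order Taylor expansion of $e^{-\theta/P_1}$ collapses the bracket to $\theta^2/(2P_1)$, which would instead give the much simpler ``$P_1\propto\theta\sqrt{\tilde\Omega}$''-type optimum if one prefers to avoid special functions; I would keep the exact bracket.)

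Then I would carry out the scalar minimization over $P_1>0$. Differentiating, $\mathrm{d}\Bar{P}_\text{RTD}/\mathrm{d}P_1=1+\tilde\Omega\big(-1+(1+\theta/P_1)e^{-\theta/P_1}\big)$, and setting this to zero gives the stationarity condition $(1+\theta/P_1)e^{-\theta/P_1}=1-1/\tilde\Omega$. The substitution $u=1+\theta/P_1$ brings it to the canonical form $u e^{-u}=e^{-1}(1-1/\tilde\Omega)$, whose real solutions are $u=-W\!\big(-e^{-1}(1-1/\tilde\Omega)\big)$ where $W$ is the Lambert $W$ function; since for the outage levels of interest $\epsilon$ is small, $\tilde\Omega\gg1$, so the argument $-e^{-1}(1-1/\tilde\Omega)$ lies in $(-1/e,0)$, and the requirement $u>1$ (equivalently $\theta/P_1>0$) picks out the $W_{-1}$ branch. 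Inverting, $P_1^{\star}=\theta/(u-1)$ with $u=-W_{-1}\big(-e^{-1}(1-1/\tilde\Omega)\big)$, which is (\ref{eq_P1}); substituting $P_1^{\star}$ back into the expression of the previous paragraph yields (\ref{eq_666666}). Finally I would confirm global optimality, either from $\Bar{P}_\text{RTD}\to\tilde\Omega\theta$ as $P_1\to0^+$ and $\Bar{P}_\text{RTD}\to\infty$ as $P_1\to\infty$ together with uniqueness of the interior critical point, or directly from $\mathrm{d}^2\Bar{P}_\text{RTD}/\mathrm{d}P_1^2>0$ there.

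The main obstacle is not any one calculation — the integral, the Lambert-$W$ inversion and the convexity check are all routine once set up — but justifying the freezing of $\Omega$: one has to argue, via a bound on $\Omega'$ over $[0,\theta/P_1]$ or simply a numerical comparison, that replacing $\Omega(g_1)$ by the constant $\tilde\Omega$ costs negligible accuracy in $\Bar{P}_\text{RTD}$ and in the resulting $P_1^{\star}$. A secondary point to nail down is the branch selection and the condition $\tilde\Omega>1$ (i.e.\ $1-1/\tilde\Omega\in(0,1)$): it must hold throughout the parameter range where the scheme is claimed valid so that a positive optimal $P_1^{\star}$ genuinely exists; for $\tilde\Omega\le1$ the approximate objective would be monotone and the infimum degenerate.
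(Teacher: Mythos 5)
There is a genuine gap, and it is exactly at the point you flag as the ``main obstacle'': the treatment of the $g_1$-dependence of $F_{g_2|g_1}^{-1}(\epsilon)$. You propose to freeze $\Omega(g_1)$ at a constant $\tilde\Omega$ over $[0,\theta/P_1]$, which reduces (\ref{eq_barP}) to $P_1+\tilde\Omega\int_0^{\theta/P_1}e^{-x}(\theta-xP_1)\,\mathrm{d}x$. The paper does something different and essential: it approximates the inverse Marcum $Q$-function directly, via $Q_1^{-1}(s,1-\epsilon)\simeq\sqrt{-2\log(1-\epsilon)}\,e^{s^2/4}$ with $s=\sqrt{2g_1(1-\sigma^2)/\sigma^2}$, which yields $P_2\simeq(\theta-g_1P_1)\,e^{-g_1(1-\sigma^2)/\sigma^2}/(-\sigma^2\log(1-\epsilon))$. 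The point is that the $g_1$-dependence of $1/F_{g_2|g_1}^{-1}(\epsilon)$ is \emph{retained} but in the form of a pure exponential, which merges with the Rayleigh weight $e^{-x}$ into $e^{-mx}$ with $m=1+(1-\sigma^2)/\sigma^2=1/\sigma^2$. This is what produces the specific constants $m$ and $c=-1/(\sigma^2\log(1-\epsilon))$ appearing in (\ref{eq_Papprobeforederrivative})--(\ref{eq_666666}). Your frozen-$\Omega$ version is exactly the special case $m=1$ of that computation, so it cannot reproduce (\ref{eq_P1}) and (\ref{eq_666666}) for general $\sigma$ (e.g.\ for the paper's $\sigma=0.8$ one has $m\approx1.56$, not $1$); the argument of the Lambert function and the exponent $e^{-m\theta/P_1}$ would both come out wrong. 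Moreover, the factor you would need to justify freezing, $e^{-g_1(1-\sigma^2)/\sigma^2}$, is not slowly varying on the integration window when $\sigma$ is small, so the bound on $\Omega'$ you hope for is not available in general.

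The rest of your plan is sound and matches the paper's route once the correct integrand is in hand: the integral evaluates in closed form with the same cancellation of the $\theta e^{-m\theta/P_1}$ terms you describe, the stationarity condition becomes $e^{-m\theta/P_1}(1+m\theta/P_1)=1-m^2/c$, the substitution $u=1+m\theta/P_1$ gives $-ue^{-u}=m^2/(ce)-1/e\in(-1/e,0)$, and the requirement $u>1$ selects the $\mathcal{W}_{-1}$ branch, giving (\ref{eq_P1}); back-substitution gives (\ref{eq_666666}). Your branch-selection and existence discussion (the analogue of your $\tilde\Omega>1$ is $c>m^2$) and the global-optimality check are correct in spirit and are actually more careful than what the paper writes. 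But as stated, the proposal's central approximation leads to different final formulas than the ones the theorem asserts.
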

\begin{proof}
Plugging (\ref{eq_cdf}) into (\ref{eq_optproblemrtd}), we have
\begin{align}
    1-Q_1\left(\sqrt{\frac{2g_1(1-\sigma^2)}{\sigma^2}},\sqrt{\frac{2(\theta-g_1P_1)}{\sigma^2 P_2}}\right) = \epsilon.
\end{align}
By using the approximation \cite[Eq. 17]{Azari2018TCultra} for moderate/large $\sigma$, i.e., if $1-Q_1(s,\rho) = 1-\epsilon$, then $\rho = Q_1^{-1}(s, 1-\epsilon) \simeq \sqrt{-2\log(1-\epsilon)}e^{\frac{s^2}{4}}$, we can obtain 
\begin{align}
    \sqrt{\frac{2(\theta-g_1P_1)}{\sigma^2 P_2}}\simeq \sqrt{-2\log(1-\epsilon)}e^{\frac{g_1(1-\sigma^2)}{2\sigma^2}}.
\end{align}
In this way, $P_2$ in (\ref{eq_barP}) is approximated by
\begin{align}
    P_2 \simeq (\theta - g_1P_1)\frac{e^{-\frac{g_1(1-\sigma^2)}{\sigma^2}}}{-\sigma^2\log(1-\epsilon)},
\end{align}
and considering RTD, (\ref{eq_barP}) can be rewritten as
\begin{align}\label{eq_Papprobeforederrivative}
    \Bar{P} &= P_1 + \int_0^{\theta/P_1} e^{- x}(\theta - xP_1)\frac{e^{-\frac{x(1-\sigma^2)}{\sigma^2}}}{-\sigma^2\log(1-\epsilon)}\text{d}x \nonumber\\
    &\overset{(a)}{=} P_1 + \frac{c}{m^2}\left(P_1 e^{-\frac{m\theta}{P_1}}-P_1 + m\theta\right),
\end{align}
where, in (a) we set $m = 1 + \frac{1-\sigma^2}{\sigma^2}$ and $c = \frac{-1}{\sigma^2\log(1-\epsilon)}$ for simplicity. Then, setting the derivative of (\ref{eq_Papprobeforederrivative})  with respect to $P_1$ equal to zero, the minimum $P_1$ for the minimum total power can be found as
\begin{align}
  \label{eq_P1}
    \hat{P}_{1,\text{RTD}} & = \operatorname*{arg}_{P_1 > 0} \Bigg\{ 1 +  \frac{c}{m^2}e^{-\frac{\theta m}{P_1}}\left(\frac{m\theta}{P_1}+1\right) - \frac{c}{m^2} = 0\Bigg\}\nonumber\\
    & = \operatorname*{arg}_{P_1 > 0} \left\{e^{-\frac{\theta m}{P_1}}\left(\frac{m\theta}{P_1}+1\right) = 1-\frac{m^2}{c} \right\}\nonumber\\
    & \overset{(b)}= \frac{-m\theta}{\mathcal{W}_{-1}\left(\frac{m^2}{ce}-\frac{1}{e}\right)+1}.
\end{align}
Here, $(b)$ is obtained by the definition of the Lambert W function $xe^x = y \Leftrightarrow x = \mathcal{W}(y)$ \cite{corless1996lambertw}. Also, because $\frac{m^2}{ce}-\frac{1}{e}<0$, we use the $\mathcal{W}_{-1}(\cdot)$ branch \cite[Eq. 16]{veberic2010having}. Then, plugging (\ref{eq_P1}) into (\ref{eq_Papprobeforederrivative}), we obtain the  minimum total transmission power as
\begin{align}\label{eq_666666}
    \hat{\Bar{P}}_{\text{RTD}} =\hat{P}_{1,\text{RTD}} + \frac{c}{m^2}\left(\hat{P}_{1,\text{RTD}} e^{-\frac{m\theta}{\hat{P}_{1,\text{RTD}}}}-\hat{P}_{1,\text{RTD}} + m\theta\right).
\end{align}
\end{proof}

\vspace{-12mm}
\subsection{On the Effect of CSI Feedback/Power Allocation}
As a benchmark, in this part, we consider the case without exploiting CSIT, i.e., we consider the typical HARQ schemes where CSI feedback is not sent along with NACK, and we do not perform power adaptation.  Here, the outage probability,  for the RTD and the INR  are given by
\begin{align}\label{eq_PrRTD}
    \zeta_{\text{RTD}} = \text{Pr}\left\{\log\left(1+\left(g_1+g_2\right)P\right)<R\Big|g_1<\frac{\theta}{P}  \right\},
\end{align}
\begin{align}\label{eq_PrINR}
\begin{split}
    \zeta_{\text{INR}}=\text{Pr}\left\{\log\left(1+g_1P\right) + \log\left(1+g_2P\right)<R\Big|g_1<\frac{\theta}{P}\right\}.
\end{split}
\end{align}
Also, in both protocols, the total average power is given by
\begin{align}\label{eq_barpbench}
    \Bar{P} &= P + P\cdot\text{Pr}\{\log\left(1+g_1P\right)<R\}\nonumber\\
    &=P(2-e^{-\frac{\theta}{P}}).
\end{align}
\begin{theorem}\label{theorem2}
Without CSIT feedback/power allocation, the outage probability of the RTD-based PA-HARQ scheme is given by (\ref{eq_RTDnopa}).
\end{theorem}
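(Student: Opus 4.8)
The plan is to turn the conditional probability in (\ref{eq_PrRTD}) into an unconditional one and then evaluate it from the joint statistics of $(g_1,g_2)$. First I would rewrite the outage event: $\log(1+(g_1+g_2)P)<R$ is equivalent to $g_1+g_2<\theta/P$ with $\theta=e^R-1$, and since $g_2=|h_2|^2\ge 0$ forces $\{g_1+g_2<\theta/P\}\subseteq\{g_1<\theta/P\}$, the conditioning in (\ref{eq_PrRTD}) collapses, so $\zeta_{\text{RTD}}=\Pr\{g_1+g_2<\theta/P\}/\Pr\{g_1<\theta/P\}$. The denominator equals $1-e^{-\theta/P}$ exactly as in (\ref{eq_pout}), so the whole task reduces to computing $\Pr\{g_1+g_2<\theta/P\}$.

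Next I would condition on $g_1=x$ and use the conditional CDF (\ref{eq_cdf}): $\Pr\{g_1+g_2<\theta/P\}=\int_0^{\theta/P}e^{-x}F_{g_2|g_1}(\theta/P-x)\,\mathrm{d}x=\int_0^{\theta/P}e^{-x}\bigl[1-Q_1(\sqrt{2x(1-\sigma^2)/\sigma^2},\sqrt{2(\theta/P-x)/\sigma^2})\bigr]\,\mathrm{d}x$. This integral has no elementary antiderivative in the stated form, so the cleanest way to close it is via (\ref{eq_H}): $(h_1,h_2)$ is a circularly symmetric complex Gaussian vector with unit marginal variances and correlation $\sqrt{1-\sigma^2}$, hence $g_1+g_2=\|(h_1,h_2)\|^2$ is a weighted sum of two independent unit-mean exponentials with weights equal to the eigenvalues $1\pm\sqrt{1-\sigma^2}$ of the $2\times 2$ covariance matrix, i.e. a hypoexponential variable whose CDF is a difference of two exponentials. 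Substituting that CDF at $z=\theta/P$ and dividing by $1-e^{-\theta/P}$ gives the claimed closed form (\ref{eq_RTDnopa}). Alternatively one may keep the Marcum-$Q$ integral and invoke a tabulated result for $\int e^{-x}[1-Q_1(a\sqrt{x},b\sqrt{z-x})]\,\mathrm{d}x$, or, in the spirit of the rest of the paper, replace $Q_1$ by its approximation so the integrand becomes the exponential/polynomial form already integrated around (\ref{eq_Papprobeforederrivative}); all routes yield the same expression.

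The only non-routine step is this integral over the Marcum $Q$-function: the obstacle is that $Q_1(a\sqrt{x},b\sqrt{z-x})$ is not directly integrable against $e^{-x}$, and the key that removes it is recognizing $g_1+g_2$ as hypoexponential (equivalently, diagonalizing the fading covariance). Everything after that is bookkeeping, although I would still check that the resulting $\zeta_{\text{RTD}}$ lies in $[0,1]$ and recovers the expected endpoints as a sanity test: $\sigma\to 0$ (fully correlated channels, $g_2=g_1$, giving $\zeta_{\text{RTD}}\to 1/(1+e^{-\theta/(2P)})$) and $\sigma\to 1$ (independent RA channel, so $g_1+g_2$ is Erlang-$2$ and $\zeta_{\text{RTD}}\to (1-(1+\theta/P)e^{-\theta/P})/(1-e^{-\theta/P})$).
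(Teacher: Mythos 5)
Your setup is correct and matches the paper: the conditioning collapses because $\{g_1+g_2<\theta/P\}\subseteq\{g_1<\theta/P\}$, the denominator is $1-e^{-\theta/P}$, and the numerator is $\int_0^{\theta/P}e^{-x}F_{g_2|x}(\theta/P-x)\,\mathrm{d}x$ with $F_{g_2|g_1}$ from (\ref{eq_cdf}). The gap is in how you close the integral. Your primary route (diagonalize the covariance of $(h_1,h_2)$ and treat $g_1+g_2$ as hypoexponential with weights $1\pm\sqrt{1-\sigma^2}$) is valid mathematics and yields an \emph{exact} CDF that is a combination of $e^{-z/(1+\sqrt{1-\sigma^2})}$ and $e^{-z/(1-\sqrt{1-\sigma^2})}$. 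But that is \emph{not} the expression (\ref{eq_RTDnopa}) you are asked to establish: (\ref{eq_RTDnopa}) contains $e^{-\theta/(P\sigma^2)}$, $e^{-\theta(1-\sigma^2)/(P\sigma^2)}$ and a cubic polynomial in $\theta/P$, which cannot be rewritten as the two-exponential hypoexponential form for general $\sigma$ (check, e.g., $\sigma^2=3/4$: the decay rates are $2/3$ and $2$ in the exact formula versus $1/3$ and $4/3$ in (\ref{eq_RTDnopa})). So your assertion that ``all routes yield the same expression'' is false; the theorem's formula is an approximation, and only one specific route produces it.

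The paper's actual argument substitutes the polynomial-exponential approximation $Q_1(s,\rho)\simeq 1-(1+\tfrac{s^2}{2})e^{-s^2/2}+(1+\tfrac{s^2}{2}+\tfrac{s^2\rho^2}{4})e^{-(s^2+\rho^2)/2}$ into the integrand; with $s^2=2x(1-\sigma^2)/\sigma^2$ and $\rho^2=2(\theta/P-x)/\sigma^2$ the factor $e^{-(s^2+\rho^2)/2}$ combines with $e^{-x}$ into the constant $e^{-\theta/(P\sigma^2)}$ times a polynomial in $x$, and term-by-term integration gives exactly the $\theta^3/P^3$, $\theta^2/P^2$, $\theta/P$ structure of (\ref{eq_RTDnopa}). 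Your fallback suggestion to reuse ``the approximation already integrated around (\ref{eq_Papprobeforederrivative})'' points at the wrong tool: that step uses the inverse-Marcum approximation $Q_1^{-1}(s,1-\epsilon)\simeq\sqrt{-2\log(1-\epsilon)}\,e^{s^2/4}$, which is suited to solving $F_{g_2|g_1}=\epsilon$ for $P_2$, not to integrating $F_{g_2|x}$ itself, and it would not generate the polynomial terms. Your two limiting sanity checks ($\sigma\to 0$ and $\sigma\to 1$) are correct and do match (\ref{eq_RTDnopa}) at the endpoints, but agreement at the endpoints does not rescue the identification in between. As written, the proposal proves a different (arguably stronger, exact) result rather than the stated one.
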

\begin{proof}
Using (\ref{eq_PrRTD}), we have
\begin{align}
    \zeta_{\text{RTD}} &= \text{Pr}\left\{g_2< \frac{\theta}{P}-g_1  \Big| g_1 < \frac{\theta}{P} \right\}\nonumber\\
    &= \frac{1}{1-e^{-\frac{\theta}{P}}}\int_{0}^{\frac{\theta}{P}} e^{- x} F_{g_2|x}\left(\frac{\theta}{P}-x\right) \text{d}x. \label{eq_nocloseform}
\end{align}
Considering (\ref{eq_cdf}), there is no closed-form solution for (\ref{eq_nocloseform}).  For this reason, we use the  approximation \cite[Eq. 14]{andras2011generalized} 
\begin{align}
    Q_1\left( s, \rho  \right) \simeq  1 - \left( 1 + \frac{s^2}{2} \right)e^{-\frac{s^2}{2}} +
    \left( 1 + \frac{s^2}{2} + \frac{s^2\rho^2}{4} \right)e^{-\frac{s^2+\rho^2}{2}},
\end{align}
which simplifies (\ref{eq_nocloseform}) to 
\begin{align}\label{eq_RTDnopa}
    &\zeta_{\text{RTD}} \simeq 
    -\frac{e^{-\frac{\theta}{P\sigma^2}}}{6\sigma^4\left(1-e^{-\frac{\theta}{P}}\right)}\Bigg(\left(6\sigma^8-12\sigma^6\right)e^{\frac{\theta}{P\sigma^2}}+12\sigma^6+\nonumber\\&\left(3\sigma^2-3\sigma^4\right)\frac{\theta^2}{P^2}+\left(12\sigma^4-6\sigma^6\right)\frac{\theta}{P}+\left(1-\sigma^2\right)\frac{\theta^3}{P^3}-6\sigma^8\Bigg)
\end{align}

\end{proof}
\vspace{-12mm}
\subsection{On the Effect of Introducing INR}
For the INR scheme, by using Jensen's inequality and the concavity of the logarithm function \cite[Eq. 30]{makki2016TWCRFFSO}
\begin{align}\label{eq_jensen}
   \frac{1}{n}\sum_{i=1}^{n} \log (1+x_i)\leq\log\left(1+\frac{1}{n}\sum_{i=1}^{n}x_i\right),
\end{align}
the closed-form expressions for the minimum required power, the average total power, as well as outage probability without power allocation are given by the following Corollary.
\begin{corollary}
With INR, the minimum required power in Round 1 and the average total power are given by (\ref{eq_P1inr}) and (\ref{eq_barpinr}), respectively.  Also, without power allocation, the outage probability of the INR-based scheme is given by (\ref{eq_INRnopa}).
\end{corollary}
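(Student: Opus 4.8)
The plan is to reduce the INR analysis to the RTD analysis already carried out in Theorems~\ref{theorem1} and~\ref{theorem2} by linearizing the accumulated mutual information with Jensen's inequality~(\ref{eq_jensen}). Applying~(\ref{eq_jensen}) with $n=2$ to the two-round INR information $\log(1+g_1P_1)+\log(1+g_2P_2)\le 2\log\!\big(1+\tfrac{g_1P_1+g_2P_2}{2}\big)$, the successful-decoding event is approximated by $g_1P_1+g_2P_2\ge \tilde\theta$ with the effective threshold $\tilde\theta \doteq 2(e^{R/2}-1)$. This is structurally identical to the RTD combining condition $g_1P_1+g_2P_2\ge\theta$ behind~(\ref{eq_optproblemrtd})--(\ref{eq_PRTDe}), with $\theta$ replaced by $\tilde\theta$. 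Hence~(\ref{eq_optprobleminr}) collapses to $F_{g_2|g_1}\!\big(\tfrac{\tilde\theta-g_1P_1}{P_2(g_1)}\big)=\epsilon$, i.e. $P_2(g_1)=(\tilde\theta-g_1P_1)/F_{g_2|g_1}^{-1}(\epsilon)$ for $g_1<\tilde\theta/P_1$ and $P_2(g_1)=0$ otherwise.

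For the first two equations of the Corollary I would then repeat the steps of the proof of Theorem~\ref{theorem1} with $\theta\to\tilde\theta$: use the moderate/large-$\sigma$ inverse Marcum-$Q$ approximation $Q_1^{-1}(s,1-\epsilon)\simeq\sqrt{-2\log(1-\epsilon)}\,e^{s^2/4}$ to get $P_2(g_1)\simeq(\tilde\theta-g_1P_1)\,e^{-g_1(1-\sigma^2)/\sigma^2}/(-\sigma^2\log(1-\epsilon))$, substitute into $\bar P_{\text{INR}}=P_1+\int_0^{\tilde\theta/P_1}e^{-x}P_2(x)\,\mathrm{d}x$, and evaluate the elementary integral to obtain the analogue of~(\ref{eq_Papprobeforederrivative}), $\bar P_{\text{INR}}=P_1+\tfrac{c}{m^2}\big(P_1 e^{-m\tilde\theta/P_1}-P_1+m\tilde\theta\big)$ with $m=1+\tfrac{1-\sigma^2}{\sigma^2}$ and $c=\tfrac{-1}{\sigma^2\log(1-\epsilon)}$. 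Setting $\partial\bar P_{\text{INR}}/\partial P_1=0$ gives $e^{-m\tilde\theta/P_1}\big(m\tilde\theta/P_1+1\big)=1-m^2/c$, which is solved through the $\mathcal W_{-1}$ branch of the Lambert $W$ function exactly as in step~$(b)$ of~(\ref{eq_P1}), yielding~(\ref{eq_P1inr}); plugging this back into $\bar P_{\text{INR}}$ gives~(\ref{eq_barpinr}).

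For the outage without power allocation I would start from~(\ref{eq_PrINR}) and again replace the INR condition via~(\ref{eq_jensen}), obtaining $\zeta_{\text{INR}}\simeq\Pr\{(g_1+g_2)P<\tilde\theta\mid g_1<\theta/P\}=\frac{1}{1-e^{-\theta/P}}\int_0^{\tilde\theta/P}e^{-x}F_{g_2|x}\!\big(\tfrac{\tilde\theta}{P}-x\big)\,\mathrm{d}x$ (the limits follow from $\tilde\theta<\theta$ and $F_{g_2|x}$ vanishing at a negative argument; note that, unlike a pure $\theta\to\tilde\theta$ substitution, the normalization still carries $\theta$). Inserting the polynomial-exponential approximation of $Q_1$ from~\cite[Eq.~14]{andras2011generalized} into $F_{g_2|x}$, as in the proof of Theorem~\ref{theorem2}, makes the integrand a polynomial in $x$ times exponentials, which integrates in closed form to give~(\ref{eq_INRnopa}).

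The main obstacle is the coupling of the two $\log(1+g_iP_i)$ terms in the INR information, which turns the Round-2 threshold $e^{R-\log(1+g_1P_1)}-1=(\theta-g_1P_1)/(1+g_1P_1)$ into a rational rather than affine function of $g_1$ and blocks the closed-form integrations; Jensen's inequality~(\ref{eq_jensen}) is precisely what removes this, at the cost of making the result an (optimistic, since the right-hand side of~(\ref{eq_jensen}) upper-bounds the accumulated rate) approximation whose accuracy must then be checked numerically. A secondary point to handle carefully is the bookkeeping of the feasible region: $P_2(g_1)=0$ once $g_1P_1\ge\tilde\theta$ even though Round~1 may still have failed because $\tilde\theta<\theta$, and the same $\tilde\theta$-versus-$\theta$ distinction governs the integration limits in both the power and the outage expressions.
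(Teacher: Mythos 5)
Your proposal matches the paper's proof essentially step for step: Jensen's inequality (\ref{eq_jensen}) with the effective threshold $\theta_1=2(e^{R/2}-1)$ reduces the INR round-2 power and outage expressions to their RTD counterparts, after which the Lambert-$\mathcal{W}$ argument of Theorem \ref{theorem1} and the Marcum-$Q$ approximation of Theorem \ref{theorem2} are reused unchanged. Your remark that the normalization $1/(1-e^{-\theta/P})$ keeps $\theta$ while only the integrand's threshold becomes $\theta_1$ is a bookkeeping detail the paper glosses over when writing (\ref{eq_INRnopa}) as the pure substitution $\zeta_{\text{RTD}}(\theta=\theta_1)$, but it does not alter the approach.
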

\begin{proof}
Using (\ref{eq_optprobleminr}), the Jensen's inequality (\ref{eq_jensen}) and defining $\theta_1 =2\left(e^{\frac{R}{2}}-1\right)$, (\ref{eq_PINRe}) can be approximated by
\begin{align}
     P_2(g_1) \simeq \frac{\theta_1 - g_1P_1}{F_{g_2|g_1}^{-1}(\epsilon)},
\end{align}
and, following the same steps as in Theorem \ref{theorem1}, we obtain the minimum required $P_1$ for the minimum total power in the INR scheme as
\begin{align}
  \label{eq_P1inr}
    \hat{P}_{1,\text{INR}}= \frac{-m\theta_1}{\mathcal{W}_{-1}\left(\frac{m^2}{ce}-\frac{1}{e}\right)+1}.
\end{align}
Also, the  minimum total  power can be calculated by
\begin{align}\label{eq_barpinr}
    \hat{\Bar{P}}_{\text{INR}} =\hat{P}_{1,\text{INR}} + \frac{c}{m^2}\left(\hat{P}_{1,\text{INR}} e^{-\frac{m\theta_1}{\hat{P}_{1,\text{INR}}}}-\hat{P}_{1,\text{INR}} + m\theta_1\right).
\end{align}

Finally, (\ref{eq_PrINR}) can be further derived by
\begin{align}
    \zeta_{\text{INR}} &= \text{Pr}\left\{  g_2< \frac{e^{R-\log(1+g_1)}-1}{P}\Big|g_1 < \frac{\theta}{P} \right\}\nonumber\\
    &\overset{(c)}{\simeq}\frac{1}{1-e^{-\frac{\theta}{P}}}\int_{0}^{\frac{\theta}{P}} e^{- x} F_{g_2|x}\left(\frac{\theta_1}{P}-x\right) \text{d}x,
\end{align}
with $(c)$ using Jensen's inequality. Then, following the same steps as in Theorem \ref{theorem2}, the outage probability is found as 
\begin{align}\label{eq_INRnopa}
    \zeta_{\text{INR}} \simeq \zeta_{\text{RTD}}(\theta = \theta_1).
\end{align}
\end{proof}\vspace{-4mm}

Finally, as a benchmark, we consider the case with no retransmission where the outage probability is given by $1-e^{-\frac{\theta}{P}}$. In this case, the required outage-constrained power without retransmission is given by $P \geq \frac{1-e^{R}}{\log(1-\epsilon)}$.




\section{{Simulation Results}}

In the simulations, we set $ \delta$ = 5 ms, $f_\text{c}$ = 2.68 GHz, and $d_\text{a} = 1.5\lambda$. Each point in the figures is obtained by averaging the system performance over $1\times10^5$ channel realizations. 

\begin{figure}
\centering
  \includegraphics[width=0.9\columnwidth]{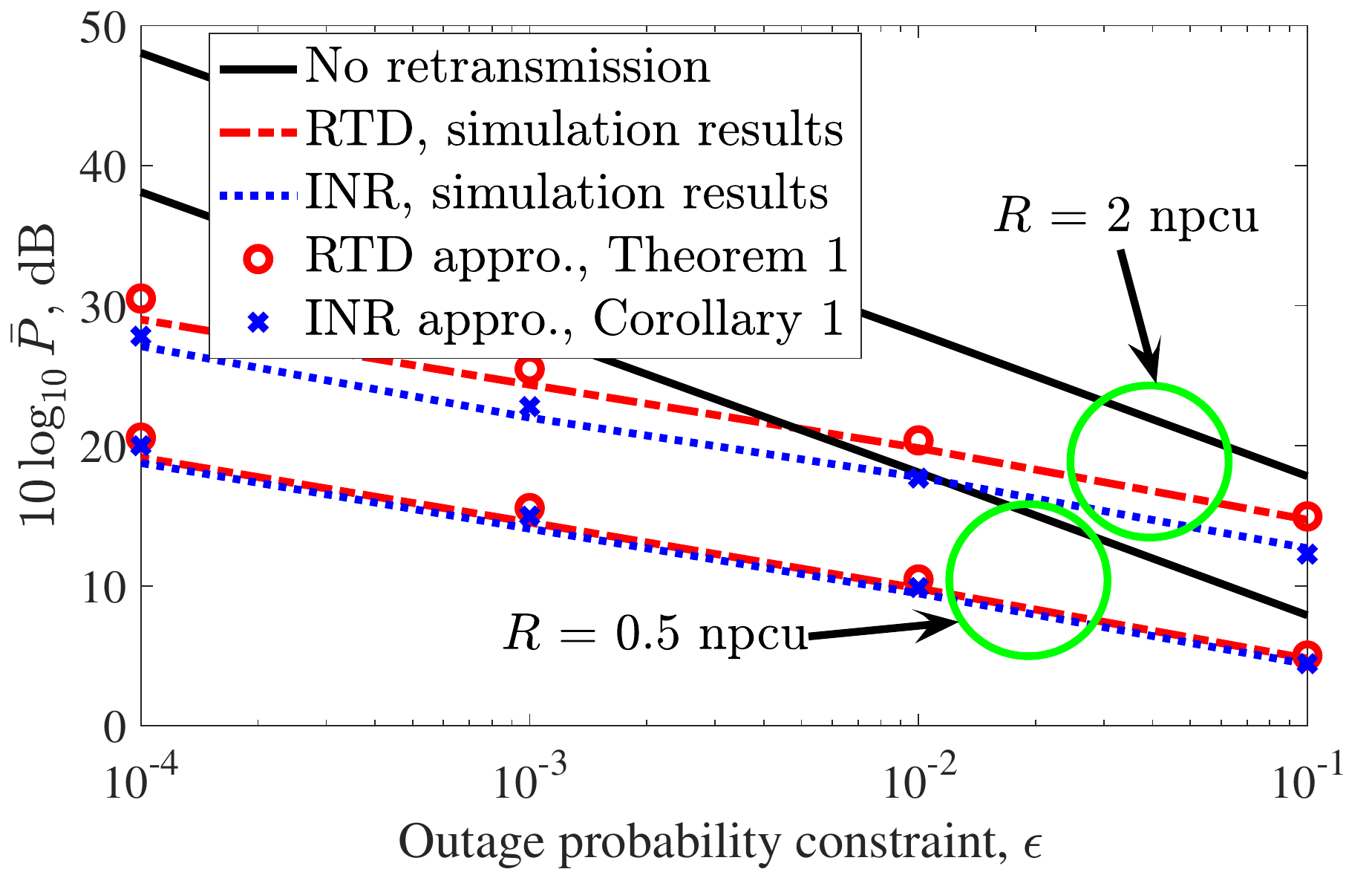}\\
\caption{Minimum required power under various outage constraints for the RTD and the INR, $\sigma = 0.8$, $R = 0.5, 2$ npcu.}\vspace{-3mm}\label{figure3}
\end{figure}

\begin{figure}
\centering
  \includegraphics[width=0.9\columnwidth]{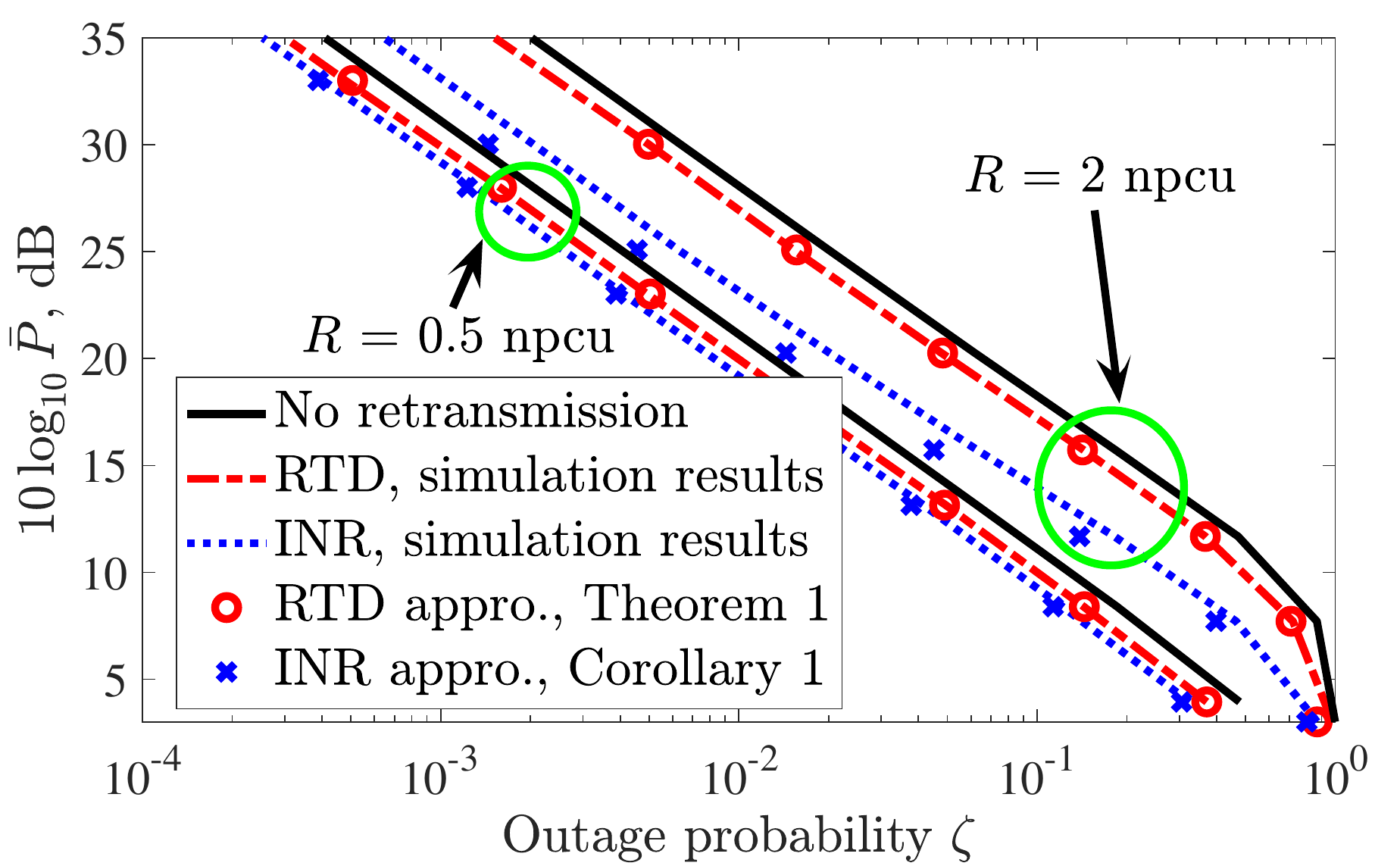}\\
\caption{Minimum required power without optimized power allocation (\ref{eq_barpbench}) under various outage probabilities, $\sigma$ = 0.8, $R$ = 0.5, 2 npcu.}\vspace{-5mm}\label{figure4}
\end{figure}

\begin{figure}
\centering
  \includegraphics[width=0.9\columnwidth]{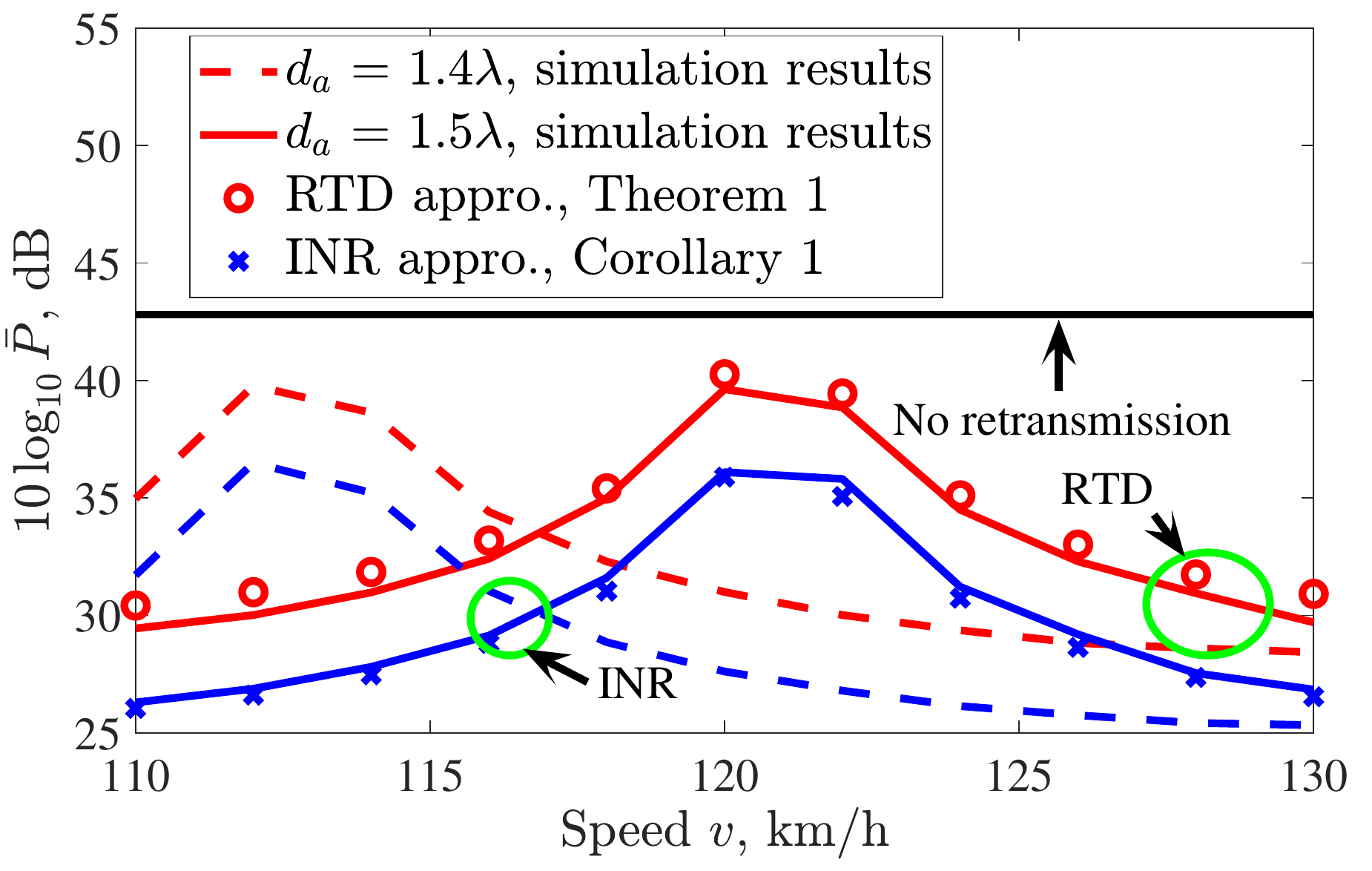}\\
\caption{Minimum required power for the RTD and the INR with given outage constraint, for different speed $v$. $R = 3$ npcu, $\epsilon = 10^{-3}$.}\vspace{-5mm}\label{figure5}
\end{figure}

Figure \ref{figure3} shows the results of the power allocation problem (\ref{eq_optproblem}) for both the RTD and the INR with different initial rates $R = 0.5, 2$ npcu, $\sigma = 0.8$, and different outage probability constraints $\epsilon$. Here, the simulation results are obtained from  solving the optimization problem (\ref{eq_optproblem}) numerically, while the approximation results for the RTD and the INR are obtained by Theorem \ref{theorem1} and Corollary 1, respectively.  In Fig. \ref{figure4}, we plot the outage probability using the  RTD and the INR  without power optimization. Here, we set $\sigma = 0.8$, and the initial rate $R = 0.5, 2$ npcu. The simulation results are obtained by (\ref{eq_PrRTD}) and (\ref{eq_PrINR}), while the analytical approximations are obtained from Theorem \ref{theorem2} for the RTD and Corollary 1 for the INR. Finally, in Fig. \ref{figure5},  we study the minimum required transmission power for different speeds $v$. Here, manipulating $v$ is  equivalent to changing the level of spatial correlation for given values of $\delta$, $f_\text{c}$ and $d_\text{a}$ (see \cite{guo2020semilinear} for the detailed effect of the vehicle speed on the spatial correlation). Also, we study the effect of different values of $d_\text{a}$ on the system performance.   According to the figures, we can conclude the following:




\begin{itemize}
    \item The approximation schemes of Theorem \ref{theorem1} and Corollary 1 are  tight for a broad range of values of initial rate $R$, speed $v$,  as well as $\epsilon$ (Figs. \ref{figure3} and \ref{figure5}). Thus, for different parameter settings, the outage-constrained power allocation for the RTD and the INR can be well determined by Theorem \ref{theorem1} and Corollary 1. 
    \item Also, the approximation scheme of Theorem \ref{theorem2} is tight for a broad range of values of average power (\ref{eq_barpbench}) as well as initial rate $R$ (Fig. \ref{figure4}). Thus, for different parameter settings, the outage probability for the proposed PA-HARQ scheme without power allocation, can be well determined by Theorem \ref{theorem2} and Corollary 1.
    \item With the deployment of the PA and power allocation, remarkable power gain is achieved especially at low outage probabilities  (Figs. \ref{figure3} and \ref{figure5}). Moreover,  as also indicated in, e.g.,  \cite{6566132}, INR outperforms RTD, in terms of outage-limited average power. However, the difference between the performance of these protocols decreases as the data rate or $\sigma$ decreases (Figs. \ref{figure3} and \ref{figure5}).
    \item Figures \ref{figure3}-\ref{figure5} emphasize the efficiency of HARQ as well as adaptive power allocation in the PA system. From Fig. \ref{figure4}, we can see that with the PA-HARQ setup, even without power allocation, we can obtain considerable performance improvements compared to the case with no retransmission. Then, as seen in Fig. \ref{figure3}, with power allocation there is much larger gain for the system performance. For instance, with $\epsilon=10^{-5}$ and $R = 4$ npcu, the RTD- and INR-based PA-HARQ can reduce the required power, compared to no retransmission case, by 25 and 30 dB, respectively. Then, as can be seen from Fig.  \ref{figure5}, the effect of power-optimized PA-HARQ increases with $\sigma$. This is because the larger $\sigma$ provides better spatial diversity of the channel, which can improve the performance with retransmissions. For the peaks in Fig. \ref{figure5}, the channel for the retransmission has the largest correlation with the one in the first round, which leads to the smallest power gain. Moreover, when the antenna separation $d_\text{a}$ decreases, the speed where the power gain is minimum, also decreases, due to the reduction of the mismatch distance $d$.
\end{itemize}

\vspace{-3mm}
\section{Conclusion}
We studied the PA-HARQ system with the spatial mismatch problem, in the context of outage-constrained power allocation. We derived  closed-form expressions for the minimum instantaneous and  total transmit power. The approximations are tight for a broad range of system configurations. Also, the  results show that, while PA-assisted adaptive power adaptation leads to considerable performance improvements, the total transmission power and the outage probability are remarkably affected by the spatial mismatch.
\vspace{-2mm}





\bibliographystyle{IEEEtran}

\bibliography{main.bib}

\end{document}